\newcommand{\ent}{\ensuremath{{\tt{h}}}}
\definecolor{darkgreen}{rgb}{0.14,0.5,0.14}
\newtheorem{theorem}{Theorem}
\newcommand{\btheo}{\begin{theorem}}
\newcommand{\etheo}{\end{theorem}}
\newcommand{\bproof}{\begin{proof}}
\newcommand{\eproof}{\end{proof}}
\newtheorem{definition}[theorem]{Definition}
\newcommand{\bdefi}{\begin{definition}}
\newcommand{\edefi}{\end{definition}}
\newtheorem{fact}[theorem]{Fact}
\newcommand{\bprop}{\begin{fact}}
\newcommand{\eprop}{\end{fact}}
\newtheorem{corollary}[theorem]{Corollary}
\newcommand{\bcor}{\begin{corollary}}
\newcommand{\ecor}{\end{corollary}}
\newtheorem{example}[theorem]{Example}
\newcommand{\bex}{\begin{example}}
\newcommand{\eex}{\end{example}}
\newtheorem{lemma}[theorem]{Lemma}
\newcommand{\blemma}{\begin{lemma}}
\newcommand{\elemma}{\end{lemma}}
\newtheorem{remark}[theorem]{Remark}
\newcommand{\bremark}{\begin{remark}}
\newcommand{\eremark}{\end{remark}}
\newtheorem{conj}[theorem]{Conjecture}
\newcommand{\bconj}{\begin{conj}}
\newcommand{\econj}{\end{conj}}
\newcommand{\naturals}{\ensuremath{\mathbb{N}}}
\newcommand{\expectation}{\ensuremath{\mathbb{E}}}
\newcommand{\prob}{\ensuremath{\mathbb{P}}}
\def\0{{\tt 0}} 
\def\1{{\tt 1}} 
\def\?{{\tt *}} 
\newcommand{\dee}{{\text d}}
\newcommand{\BPsmall}{\ensuremath{\text{\tiny BP}}} 
\newcommand{\qed}{{\hfill \footnotesize $\blacksquare$}}
\renewcommand{\mid}{\,|\,}
\newcommand{\dens}[1]{\mathsf{#1}}
\newcommand{\Ldens}[1]{\dens{#1}}
\newcommand{\Ddens}[1]{\mathfrak{{#1}}}
\newcommand{\BMS}{\ensuremath{\text{BMS}}}
\newcommand{\absDdist}[1]{\absd{\mathfrak{\MakeUppercase{#1}}}}
\newcommand{\absDdens}[1]{\absd{\Ddens{#1}}}
\newcommand{\absd}[1]{|#1|}
\newcommand{\dr}{d_r}
\newcommand{\dl}{d_l}
\DeclareMathOperator{\perr}{\mathfrak{E}}
\newcommand{\ind}{\mathbbm{1}}
\newcommand{\indicator}[1]{\ind_{\{ #1 \}}}
\DeclareMathOperator{\batta}{\mathfrak{B}}      
\newcommand{\entropy}{\text{H}}
\newcommand{\Lip}{\text{Lip}}
\newcommand{\vconv}{\circledast}
\newcommand{\cconv}{\boxast}
\newcommand{\SatLdens}[1]{\lfloor \Ldens{#1}\rfloor}
\newcommand{\sym}[1]{{#1_{\text{\tiny sym}}}}
\newcommand{\SatabsDdist}[2]{\absd{\lfloor\mathfrak{\MakeUppercase{#1}}\rfloor_{#2}}}
\newdimen\arrayruleHwidth
\def\Hline{\noalign{\ifnum0=`}\fi\hrule \@height \arrayruleHwidth
   \futurelet \@tempa\@xhline}
\begin{document}
\title{The Effect of Saturation on Belief Propagation Decoding of LDPC Codes} 
\author{\IEEEauthorblockN{Shrinivas Kudekar\IEEEauthorrefmark{1}, Tom Richardson\IEEEauthorrefmark{1} and Aravind Iyengar\IEEEauthorrefmark{1}\\ }
\IEEEauthorblockA{\IEEEauthorrefmark{1}Qualcomm,  USA}
 }

\maketitle

\begin{abstract}
We consider the effect of LLR saturation on belief propagation decoding of low-density parity-check codes. 
Saturation occurs universally in practice and is known to have a significant effect on error floor performance.
Our focus is on threshold analysis and stability of density evolution.

We analyze the decoder for certain low-density parity-check code ensembles and show that 
belief propagation decoding generally degrades gracefully with saturation.
Stability of density evolution is, on the other hand, rather strongly affected by saturation and
the asymptotic qualitative effect of saturation is similar to reduction of variable node degree by one.
\end{abstract}

\section{Introduction} Standard belief propagation (BP) decoding of binary LDPC codes involves
passing messages typically representing log-likelihood ratios (LLRs) 
which can take any value in $\overline{\mathbb{R}} \triangleq
\mathbb{R}\cup \{\pm\infty\}$ \cite{RiU08}. 
Practical implementations of decoders typically use uniformly quantized and bound LLRs.
Hence, it is of interest to understand the effect of saturation of LLR magnitudes
as a perturbation of full belief propagation.
We call such a clipped or a saturated decoder
as a {\em saturating} belief propagation decoder (SatBP).
Note that the decoder is strictly speaking not a BP decoder, but we adhere to the BP
nomenclature as we view SatBP as a perturbation of BP.

The papers \cite{6290251,6120169,5485006, 6567866} consider the effect of saturation on error floor performance.  It is observed in these works that saturation can limit the ability of decoding to escape trapping set behavior, thereby worsening error floor performance.
Although we take a different approach in this paper by focussing on asymptotic behavior, the underlying message is similar: saturation can dramatically effect the stability of the decoder.
In \cite{6284049,6685976} some decoder variations are given the help reduce error floors.
Here we see an explicit effort to ameliorate the effect of saturation.
A related but distinct direction was taken in \cite{VNC14}.  There the authors made modifications to discrete node update rules so as to reduce error floor failure events. 
There have been other work that examine the effects of practical concessions.
In \cite{6134417} the authors consider the effect of quantization in LDPC coded flash memories.
In \cite{5624882, 6363268} model the effect of saturation and quantization as noise terms.
Finally, in \cite{5205826} an analysis was done to evaluate the effect on capacity on quantization of
channel outputs.

In the design of capacity-achieving codes it would be helpful to understand
how practical decoder concessions, like saturation, affect performance.
In particular, if LLRs are saturated at magnitude $K$ then how much degradation
from the BP threshold should be expected.  
Naturally, one expects that as $K\to +\infty$, that one can reliably transmit
arbitrarily close to the BP threshold \cite{RiU08}.
We will see that this is not entirely correct and that, in particular, saturation can
undermine the stability of the perfect decoding fixed point. 

\section{BP decoding,  Density Evolution and the Wasserstein Distance} In this
section we briefly review the BP decoder and the density
evolution analysis \cite{RiU01} in the case of transmission over a
general BMS channel. Most of the material presented here can be
found in \cite{RiU08}.

Let $X (=\pm 1)$ denote the channel input, let $Y$
denote the channel output, and
let $p(Y=y \mid X=x)$ denote the {\em transition probability}
of the channel. We generally characterize
the channel by its so-called $L$-density, $\Ldens{c}$ which 
is the distribution of
$ \ln \frac{p(Y \mid X=1)}{p(Y \mid X=-1)} $
conditioned on $X=1.$  Generally, we may assume that 
$
Y=\ln \frac{p(Y \mid X=1)}{p(Y \mid X=-1)}\,,
$
i.e., the output of the channel is the associated LLR.
Channel symmetry is the condition $p(Y=y \mid X=x) = p(Y=-y \mid X=-x)$
and the $L$-densities $\Ldens{c}$ that result are symmetric, \cite{RiU08},
which means $e^{-\frac{1}{2}x}\Ldens{c}(x)$ is an even function of $x.$
 We recall that all densities
which stem from BMS channels are symmetric, see \cite[Sections
4.1.4, 4.1.8 and 4.1.9]{RiU08}. \qed 

Given $Z$ distributed according to $\Ldens{c}$, we write $\Ddens{c}$
to denote the distribution of $\tanh(Z/2),$
and $\absDdens{c}$
to denote the distribution of $|\tanh(Z/2)|.$
We refer to these as the $D$ and $|D|$ distributions.
We use $\absDdist{c}$ to denote the corresponding cumulative $|D|$ distribution,
see \cite[Section~4.1.4]{RiU08}. 
Under symmetry the distribution of $|Z|$ determines the distribution of $Z.$

For threshold analysis of LDPC ensembles we typically consider a
parameterized {\em family} of channels and write $\{ \BMS(\sigma)\}$ to
denote the family as parameterized by the scalar $\sigma$. Often it
will be more convenient to denote this family by $\{\Ldens{c}_\sigma\}$,
i.e., to use the family of $L$-densities which characterize the
channel family. 
One natural candidate for the parameter $\sigma$ is the 
entropy of the channel denoted by $\ent$. Thus, we also consider the characterization
of the family given by the BMS($\ent$). 

Let $p_{Z \mid X}(z\mid x)$ denote the transition probability
associated to a BMS channel $\Ldens{c}'$ and let $p_{Y \mid X}(y
\mid x)$ denote the transition probability of another BMS channel
$\Ldens{c}$.  We then say that $\Ldens{c}'$ is {\em degraded} with
respect to $\Ldens{c},$ denoted $\Ldens{c} \prec \Ldens{c}',$
if there exists a channel $p_{Z \mid Y}(z\mid
y)$ so that  $p_{Z \mid X}(z \mid x) = \sum_{y} p_{Y \mid X}(y\mid
x) p_{Z \mid Y}(z\mid y)$.


A BMS channel family $\{\BMS(\sigma)\}_{\underline{\sigma}}^{\overline{\sigma}}$ is said to be {\em ordered} (by
degradation) if $\sigma_1 \leq \sigma_2$ implies $\Ldens{c}_{\sigma_1}
\prec \Ldens{c}_{\sigma_2}$.

Useful functionals of densities include the Battacharyya, the entropy, and
the error probability functionals.
For a density $\Ldens{a}$, these are denoted by $\batta(\Ldens{a})$,
$\entropy(\Ldens{a})$, and $\perr(\Ldens{a})$, respectively and are defined by
\begin{align*}
\batta(\Ldens{a}) & =\expectation( e^{-Z/2}), \;\;
\entropy(\Ldens{a})  = \expectation( \log_2(1\!+\!e^{-Z})),  \\
\perr(\Ldens{a}) & = \prob{\{Z<0\}} + \frac{1}{2} \prob \{Z=0\}
\end{align*}
where $Z$ is distributed according to $\Ldens{a}$.
Note that these definitions are valid even if $\Ldens{a}$ is not symmetric, although they
loose some of their original meaning.  We will apply these definitions, especially $\batta,$ to saturated densities
that are not necessarily symmetric.

%

\subsubsection{BP Decoder and Density Evolution}

The definition of the standard BP decoder can be found in \cite{RiU08}. The
asymptotic performance of the BP decoder is given by the density evolution
technique \cite{RiU01, RiU08}.  

\begin{definition}[Density Evolution for BP Decoder cf. \cite{RiU08}]
For $\ell \geq 1$, the DE equation for a
$(\dl, \dr)$-regular ensemble is given by
$$
T(\Ldens{c}, \Ldens{x}) \triangleq \Ldens{x}_{\ell} = \Ldens{c} \vconv (\Ldens{x}^{\cconv \dr-1}_{\ell-1})^{\vconv \dl-1}.
$$
Here, $\Ldens{c}$ is the $L$-density of the BMS channel over which
transmission takes place and $\Ldens{x}_{\ell}$ is the density emitted by
variable nodes in the $\ell$-th round of density evolution. Initially we
have $\Ldens{x}_{0}=\Delta_0$, the delta function at $0$. The operators
$\vconv$ and $\cconv$ correspond to the convolution of densities at
variable and check nodes, respectively, see \cite[Section 4.1.4]{RiU08}.
\qed
\end{definition}
{\em Discussion}: The DE analysis is simplified when we consider the class of
symmetric message-passing decoders. The definition of symmetric message-passing
decoders can be found in \cite{RiU08}. Note that this definition of symmetry
pertains to the actual messages in the decoder and not to the densities which
appear in the DE analysis. We will see later that the clipped or the saturated
decoder is a symmetric message-passing decoder in this sense and hence its DE analysis is
simplified by restricting to consideration of the all-zero codeword.

\begin{definition}[BP Threshold for regular ensembles]
Consider an ordered and complete channel family $\{\Ldens{c}_\ent\}$.
Let $\Ldens{x}_{\ell}(\ent)$ denote the distribution in the $\ell$-th round
of DE when the channel is $\Ldens{c}_\ent$.
Then the {\em BP threshold} of the $(\dl, \dr)$-regular ensemble 
is defined as
$
\ent^{\BPsmall}(\dl, \dr, \{\Ldens{c}_\ent\})  = \sup\{\ent: \Ldens{x}_{\ell}(\ent) \stackrel{\ell \to \infty}{\rightarrow} \Delta_{+\infty}\}.
$
Under symmetry an equivalent definition is
\begin{align*}
\ent^{\BPsmall}(\dl, \dr, \{\Ldens{c}_\ent\}) & = \sup\{\ent: \perr(\Ldens{x}_{\ell}(\ent)) \stackrel{\ell \to \infty}{\rightarrow} 0\}.
\end{align*}
The later form is more convenient for our purposes and it is the one we shall adopt.
\qed
\end{definition}

In the sequel we will use the Wasserstein metric to measure distance between distributions.
We recall the definition of the Wasserstein metric below. For details  see \cite{KRU12b}.
\begin{definition}[Wasserstein Metric -- \protect{\cite[Chapter 6]{Villani09}}]\label{def:wasserstein}
Let $\absDdens{a}$ and $\absDdens{b}$ denote two $|D|$-distributions.
The Wasserstein  metric,
denoted by $d(\absDdens{a}, \absDdens{b})$, is defined as
\begin{align}\label{eq:blmetric}
d(\absDdens{a}, \absDdens{b})=\!\!\!\!\!\!\sup_{f(x) \in \Lip(1)[0, 1]} \!\Big\vert \int_{0}^{1} \!\!f(x)(\absDdens{a}(x)\!-\! \absDdens{b}(x)) \,\dee x \Big\vert,
\end{align}
\qed
\end{definition}

%

\section{Saturated Belief Propagation Decoding}

We first consider the analysis of the saturated BP decoder. More precisely, 
we consider decoding with BP rules but with messages restricted to the domain
$[-K, K]$ for some $K > 0$. 

\subsection{Saturated Decoder}
\begin{definition}[Saturation]
We define the \emph{saturation} operation at $\pm K$ for some $K \in \mathbb{R}^+$, denoted $\lfloor\cdot\rfloor_K$, by
\begin{equation} \label{eq_clip}
\lfloor x \rfloor_K = \min(K, |x|)\cdot\mathrm{sgn}(x).
\end{equation}
\end{definition}

\begin{definition}[SatBP Decoder]\label{def:clippedBPdecoder}
Consider the standard $(\dl, \dr)$-regular ensemble. 
The saturated BP decoder is defined by the following rules. 
Let $\phi^{(\ell)}(\mu_1,\dots,\mu_{\dr-1})$ and $\psi^{(\ell)}(\mu_1, \dots,
\mu_{\dl-1})$ denote the outgoing message from the
check node and the variable node side respectively. Abusing the notation above 
 and denoting the incoming messages on both the check node and the variable node side by $\mu_i$, we have
\begin{align*}
\phi^{(\ell)}(\mu_1,\dots,\mu_{\dr-1}) & = \Big\lfloor
2\tanh^{-1}\Big(\prod_{i=1}^{\dr-1}\tanh(\mu_i/2)\Big)\Big\rfloor_K, \\
\psi^{(\ell)}(\mu_1,\dots,\mu_{\dl-1}) & = \Big\lfloor \mu_0 + \sum_{i=1}^{\dl-1}
\mu_i \Big\rfloor_K,
\end{align*}
where $\mu_0$ is the message coming from the channel. Also, we set
$\phi^{(0)}(\mu_1,\dots,\mu_{\dr-1}) = 0$. \qed
\end{definition}

Using Definition 4.83 in \cite{RiU08} we have the following.
\begin{lemma}[SatBP Decoder is symmetric]\label{lem:symmClipped}
The SatBP decoder given in Definition~\ref{def:clippedBPdecoder} is a symmetric decoder.
\end{lemma}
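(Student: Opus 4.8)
The plan is to verify directly the two defining conditions of a symmetric message-passing decoder in the sense of Definition 4.83 of \cite{RiU08}: the \emph{variable-node} symmetry condition, which requires $\psi^{(\ell)}$ to be odd under a simultaneous sign flip of all its arguments together with the channel message $\mu_0$, and the \emph{check-node} symmetry condition, which requires $\phi^{(\ell)}$ to pick up the product of the input signs when each input is independently sign-flipped. Channel symmetry is already assumed, so establishing these two conditions on the update maps suffices. The entire argument rests on one elementary observation that I would isolate at the outset: the saturation map $\lfloor\cdot\rfloor_K$ is odd, and more precisely commutes with sign flips, i.e.\ $\lfloor b x\rfloor_K = b\lfloor x\rfloor_K$ for every $b\in\{\pm 1\}$. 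This is immediate from \eqref{eq_clip}, since $|bx|=|x|$ and $\mathrm{sgn}(bx)=b\,\mathrm{sgn}(x)$.

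For the check-node rule I would exploit that both $\tanh$ and $\tanh^{-1}$ are odd. Replacing each $\mu_i$ by $b_i\mu_i$ with $b_i\in\{\pm1\}$ turns $\tanh(\mu_i/2)$ into $b_i\tanh(\mu_i/2)$, so the product $\prod_{i=1}^{\dr-1}\tanh(\mu_i/2)$ acquires the global factor $B:=\prod_{i=1}^{\dr-1}b_i$. Oddness of $\tanh^{-1}$ pulls this factor outside, and the sign-flip invariance of saturation pulls it through $\lfloor\cdot\rfloor_K$, giving $\phi^{(\ell)}(b_1\mu_1,\dots,b_{\dr-1}\mu_{\dr-1}) = B\,\phi^{(\ell)}(\mu_1,\dots,\mu_{\dr-1})$, which is exactly the check-node condition. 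The degenerate initialization $\phi^{(0)}\equiv 0$ satisfies this relation trivially.

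For the variable-node rule I would flip the signs of all incoming messages and of the channel message $\mu_0$ simultaneously. Then the argument of the saturation becomes $-(\mu_0+\sum_{i=1}^{\dl-1}\mu_i)$, and since $\lfloor\cdot\rfloor_K$ is odd this equals $-\psi^{(\ell)}(\mu_1,\dots,\mu_{\dl-1})$, establishing the variable-node symmetry condition.

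I do not expect any genuine obstacle, as each step is a one-line manipulation. The only point meriting care — and the single new ingredient relative to the classical fact that ordinary BP is symmetric — is the commutation of $\lfloor\cdot\rfloor_K$ with sign flips; once that small lemma is in hand, the two conditions drop out mechanically. A secondary point worth stating explicitly is that symmetry in the sense of Definition 4.83 concerns the decoder \emph{maps} themselves and not the DE densities, so no claim about preservation of density symmetry is required here; that is precisely what later licenses restricting the DE analysis to the all-zero codeword.
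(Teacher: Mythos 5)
Your proof is correct and takes essentially the same approach as the paper: the paper offers no written-out proof at all, merely stating that the lemma follows ``Using Definition 4.83 in \cite{RiU08}'', and your argument is exactly the omitted verification of the check-node and variable-node symmetry conditions. The one substantive ingredient you isolate --- that $\lfloor b x\rfloor_K = b\lfloor x\rfloor_K$ for $b\in\{\pm 1\}$, so saturation preserves the sign-factoring structure of the BP update rules --- is precisely what justifies the paper's one-line appeal.
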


{\em Discussion:} The symmetry of the message-passing decoder together with symmetry of the channel allows us to use the all-zero codeword assumption. This along with the concentration results (see
Theorem 4.94 in \cite{RiU08}) allows to write down the density evolution of the
SatBP decoder in the usual way.

Given $X \sim \Ldens{a}$ let $\SatLdens{a}_K$ denote the distribution of 
$\lfloor X \rfloor_K.$  Note that the saturation operation can be viewed as a channel taking
$X$ to $\lfloor X \rfloor_K.$  We immediately have
$
\Ldens{a} \prec \SatLdens{a}_K\,.
$
In general $\SatLdens{a}_K$ will not be symmetric even if $\Ldens{a}$ is symmetric
since we will not typically have $\SatLdens{a}_K(-K)= e^{-K} \SatLdens{a}_K(K).$
If $\Ldens{a}$ is symmetric then we will have
$\SatLdens{a}_K(-K)  \le  e^{-K} \SatLdens{a}_K(K).
$
Although using lemma~\ref{lem:symmClipped} one can write down the DE recursion
for the SatBP decoder, we know that in general the densities will not be
symmetric. Two of the most useful properties of DE for BP is that it preserves both symmetry of 
densities and ordering by degradation.  These properties are sacrificed by saturation, but can be recovered
with a slight variation. 
The idea is to slightly degrade the density by moving some probability mass from
$K$ to $-K.$  This can be interpreted operationally as flipping the sign of a message with magnitude
$K$ with some probability $\lambda.$  The flipping rate $\lambda$ is chosen so that the resulting 
probability that the sign of the message is incorrect is $e^{-K}/(1+e^{-K}).$  In general $\lambda$
is upper bounded by this value and for large $K$ this is a small perturbation. 
With this perturbation both density symmetry and ordering by degradation are recovered.
Let us introduce the notation $D(p,z)$ to denote the density
$
D(p,z) = p \Delta_{-z} + (1-p) \Delta_{z}\,.
$
Using this notation we have for symmetric $\Ldens{a},$
$
\SatLdens{a}_K = \gamma D(q,z)(x) + \Ldens{a}(x)\mathds{1}_{\{|x| < K\}}
$
where $\gamma = \prob_{\Ldens{a}} \{ |x| \ge K \}$ and $\gamma q = \prob_{\Ldens{a}} \{ x \le -K \}.$
\begin{lemma}[Symmetric Saturation] \label{lem:symclipprop}
Given a symmetric density $\Ldens{a}$ we define
$
\SatLdens{a}_\sym{K} = \gamma D(p,z)(x) + \Ldens{a}(x)\mathds{1}_{\{|x| < K\}}
$
where $p = e^{-K}/(1+e^{-K})$ and $\gamma =  \prob_{\Ldens{a}} \{ |x| \ge K \}.$
Then, (i) $\SatLdens{a}_\sym{K}$ is a symmetric $L$-density and (ii) $\SatLdens{a}_K \prec \SatLdens{a}_\sym{K}$.
\end{lemma}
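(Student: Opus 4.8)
The plan is to treat the two claims separately, since both reduce to elementary mass bookkeeping on the atoms at $\pm K$. Write $\SatLdens{a}_K = \gamma D(q,K) + \Ldens{a}(x)\mathds{1}_{\{|x|<K\}}$ and $\SatLdens{a}_\sym{K} = \gamma D(p,K) + \Ldens{a}(x)\mathds{1}_{\{|x|<K\}}$, where $\gamma q = \prob_{\Ldens{a}}\{x \le -K\}$ and $p = e^{-K}/(1+e^{-K})$; the key observation is that the two objects share an \emph{identical} continuous part on $(-K,K)$ and differ only in how the common atomic mass $\gamma$ is split between $-K$ and $+K$. For (i), writing $\Ldens{b}$ for $\SatLdens{a}_\sym{K}$, I would verify the symmetry relation $\Ldens{b}(-x)=e^{-x}\Ldens{b}(x)$ piece by piece. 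On $(-K,K)$ the continuous part is the restriction of the symmetric density $\Ldens{a}$, and since the indicator $\mathds{1}_{\{|x|<K\}}$ is even this part already inherits the relation from $\Ldens{a}$. For the two atoms it suffices to check the discrete analogue, namely that the mass at $-K$ equals $e^{-K}$ times the mass at $+K$; with the prescribed $p$ this reads $\gamma p = e^{-K}\gamma(1-p)$, which is exactly the defining identity $p(1+e^{-K})=e^{-K}$. Hence $\SatLdens{a}_\sym{K}$ is symmetric, and it is a genuine density since its total mass is $\gamma + (1-\gamma) = 1$.

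For (ii) I would exhibit an explicit degrading channel, matching the operational sign-flip picture from the discussion. Keep every message with $|x|<K$ and every message at $-K$ unchanged, and flip a message at $+K$ to $-K$ with probability $\lambda$. This kernel $p_{Z\mid Y}$ does not depend on the transmitted bit, so it is a legitimate processing in the sense of the degradation definition, and it acts as the identity on the shared continuous part. Applying it to $\SatLdens{a}_K$ moves mass $\lambda\gamma(1-q)$ from $+K$ to $-K$, so the resulting atom at $-K$ has mass $\gamma q + \lambda\gamma(1-q)$; setting this equal to $\gamma p$ forces $\lambda = (p-q)/(1-q)$, after which the atom at $+K$ automatically carries the correct mass $\gamma(1-p)$. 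It remains to confirm $\lambda\in[0,1]$: the upper bound is immediate since $p<1$, while $0\le\lambda$ requires $q\le p$, which is the inequality already flagged in the discussion. By symmetry of $\Ldens{a}$, $\gamma q = \int_{K}^{\infty} e^{-u}\Ldens{a}(u)\,\dee u \le e^{-K}\int_{K}^{\infty}\Ldens{a}(u)\,\dee u = e^{-K}\gamma(1-q)$, which rearranges to $q \le e^{-K}/(1+e^{-K}) = p$. Thus the flip is a valid channel and $\SatLdens{a}_K \prec \SatLdens{a}_\sym{K}$.

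The step requiring the most care is the degradation claim, precisely because $\SatLdens{a}_K$ is in general \emph{not} symmetric, so the convenient symmetric-channel characterizations of $\prec$ (for instance via the $\absDdist{}$ distribution) are unavailable and one genuinely has to produce a physical channel. The two delicate points are therefore (a) that the constructed kernel is bit-independent, so that $p_{Z\mid X}(z\mid x)=\sum_y p_{Y\mid X}(y\mid x)p_{Z\mid Y}(z\mid y)$ holds simultaneously for $x=\pm1$, and (b) the sign of $\lambda$, which is exactly where symmetry of the original $\Ldens{a}$ enters through $q\le p$. Everything else is routine accounting, and the degenerate cases ($\gamma=0$, or $q=p$ so that $\lambda=0$) are immediate.
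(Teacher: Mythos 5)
Your proof is correct and follows essentially the same route as the paper: part (i) by direct verification of the symmetry relation (which the paper simply declares immediate), and part (ii) by exhibiting an explicit, bit-independent sign-flipping channel at magnitude $K$, whose validity rests on $q \le p$ --- an inequality the paper only asserts and you prove via the symmetry integral. The sole cosmetic difference is that you flip only the $+K$ atom, giving $\lambda = (p-q)/(1-q)$, whereas the paper flips magnitude-$K$ messages of either sign, determining $\lambda$ from $p = \lambda(1-q)+(1-\lambda)q$; both kernels realize $\SatLdens{a}_\sym{K}$ from $\SatLdens{a}_K$.
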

\begin{proof}
Part (i) is immediate. To prove part (ii) we note that comparing with the unsymmetrized case we see that 
\( p \ge q \,.\)  Thus, $\SatLdens{a}_\sym{K}$ can be realized by taking messages with distribution
$\SatLdens{a}_K$ and flipping the sign of a message with magnitude $K$ by a quantity $\lambda$
 which is determined by
$
p = \frac{e^{-K}}{1+e^{-K}} = \lambda (1-q) + (1-\lambda) q\,.
$
\end{proof}
As a consequence of Lemma \ref{lem:symclipprop}, we will term the operation used
to obtain $\SatLdens{a}_\sym{K}$ from $\Ldens{a}$ as \emph{symmetric-saturation}. 

We summarize all the claims above in the following.
\begin{corollary}\label{lem:degradationOrder}
We have $\Ldens{a} \prec \SatLdens{\Ldens{a}}_K \prec \SatLdens{a}_\sym{K}.$ 
\end{corollary}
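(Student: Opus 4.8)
The plan is to prove the chain by establishing the two pairwise relations separately and then composing them via transitivity of degradation. First I would dispatch the left-hand inequality $\Ldens{a} \prec \SatLdens{a}_K$, which is already noted in the discussion preceding the statement. The point is that saturation is a deterministic map of the LLR, hence a (degenerate) channel with transition law $p_{Z \mid Y}(z \mid y) = \indicator{z = \lfloor y \rfloor_K}$. Passing the channel output $Y \sim \Ldens{a}$ through this map produces exactly $\SatLdens{a}_K$, and substituting into the definition of degradation gives $\Ldens{a} \prec \SatLdens{a}_K$.

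For the right-hand inequality $\SatLdens{a}_K \prec \SatLdens{a}_\sym{K}$ I would simply invoke part (ii) of Lemma~\ref{lem:symclipprop}, which has already been proved: there the symmetrized density is realized from $\SatLdens{a}_K$ by an explicit sign-flip channel acting on messages of magnitude $K$ with flipping rate $\lambda$, so this too is a degradation witnessed by a concrete channel.

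It then remains to chain the two relations, for which the only ingredient is transitivity of $\prec$. I would recall that if $\Ldens{c} \prec \Ldens{c}'$ is witnessed by a channel $p_{Z \mid Y}$ and $\Ldens{c}' \prec \Ldens{c}''$ by a channel $p_{W \mid Z}$, then the composed channel $p_{W \mid Y}(w \mid y) = \sum_z p_{W \mid Z}(w \mid z)\, p_{Z \mid Y}(z \mid y)$ is again a valid channel and, substituted into the degradation identity, witnesses $\Ldens{c} \prec \Ldens{c}''$. Applying this to the two relations above yields the asserted chain. Since both constituent relations are already in hand, there is no real obstacle here: the corollary is essentially a bookkeeping statement, and the only substantive fact is that the degradation order is transitive, which follows immediately from the fact that a composition of channels is again a channel. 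The one step worth spelling out rather than taking for granted is the verification that saturation qualifies as a channel in the first relation, so that the definition of degradation applies verbatim.
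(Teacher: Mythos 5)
Your proposal is correct and takes essentially the same route as the paper: the corollary is stated there as a summary of exactly the two facts you use, namely that saturation is itself a (deterministic) channel acting on the output, giving $\Ldens{a} \prec \SatLdens{a}_K$, and Lemma~\ref{lem:symclipprop}(ii), giving $\SatLdens{a}_K \prec \SatLdens{a}_\sym{K}$, chained by transitivity of degradation. Your explicit verification that the saturation map and the channel composition satisfy the definition of degradation is just a more careful spelling-out of what the paper leaves implicit.
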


\begin{lemma}\label{lem:DistSymClip}
Let $\Ldens{a}$ be a symmetric $L$-density. 
Then,
$d(\Ldens{a}, \SatLdens{a}_\sym{K}) \leq 1 - \tanh(K/2).
$
\end{lemma}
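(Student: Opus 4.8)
The plan is to recast the statement as a comparison of $|D|$-distributions and then exploit the $1$-Lipschitz test functions directly, with no appeal to machinery beyond the definition in \eqref{eq:blmetric}. First I would make explicit that, per Definition~\ref{def:wasserstein}, $d(\Ldens{a}, \SatLdens{a}_\sym{K})$ is really a distance between the $|D|$-distributions $\absDdens{a}$ and $\absd{\SatLdens{a}_\sym{K}}$, i.e. between the laws of $|\tanh(X/2)|$. The crucial reduction is that symmetric-saturation touches only the \emph{sign} of the messages whose magnitude reaches $K$: it splits the mass $\gamma = \prob_{\Ldens{a}}\{|x|\ge K\}$ between $+K$ and $-K$, but every such message still has magnitude exactly $K$, and messages with $|x| < K$ are left alone. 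Hence, writing $U = |\tanh(X/2)|$ with $X \sim \Ldens{a}$ and $\tau = \tanh(K/2)$, the $|D|$-distribution of $\SatLdens{a}_\sym{K}$ is precisely the law of $\min(U, \tau)$: the sign reshuffling is invisible to the magnitude.

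Next I would argue directly from the dual definition. Fix any $f \in \Lip(1)[0,1]$ and interpret the $|D|$-distributions as probability measures (there is a point mass at $\tau$). Then the integral in \eqref{eq:blmetric} is $\expectation[f(U)] - \expectation[f(\min(U,\tau))] = \expectation[f(U) - f(\min(U,\tau))]$. Because $f$ is $1$-Lipschitz, the integrand is bounded pointwise by $|U - \min(U,\tau)| = (U - \tau)\,\mathds{1}_{\{U > \tau\}}$, so the absolute value of the integral is at most $\expectation[(U-\tau)\,\mathds{1}_{\{U > \tau\}}]$. Since this holds uniformly in $f$, taking the supremum yields $d(\Ldens{a}, \SatLdens{a}_\sym{K}) \le \expectation[(U-\tau)\,\mathds{1}_{\{U > \tau\}}]$, which is just the cost of the natural monotone transport plan that caps each point at $\tau$.

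Finally I would close the estimate using $U = |\tanh(X/2)| \le 1$: on $\{U > \tau\}$ we have $U - \tau \le 1 - \tau$, so $\expectation[(U-\tau)\,\mathds{1}_{\{U > \tau\}}] \le (1-\tau)\,\prob\{U > \tau\} \le 1 - \tau = 1 - \tanh(K/2)$, the claimed bound. The only genuinely delicate point is the first step, namely translating saturation from the $L$-domain into the $|D|$-domain and recognizing that the sign flipping inherent in symmetric-saturation leaves the magnitude law $\min(U,\tau)$ untouched; once that identification is secured, the rest is a one-line consequence of the Lipschitz property and the bound $U \le 1$.
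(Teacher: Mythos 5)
Your proof is correct, and it rests on the same key identification as the paper's: pass to the $|D|$-domain and observe that symmetric saturation leaves the magnitude law of $\Ldens{a}$ untouched below $\tau=\tanh(K/2)$ and collapses everything above onto $\tau$, i.e., the $|D|$-law of $\SatLdens{a}_\sym{K}$ is that of $\min(U,\tau)$ with $U=|\tanh(X/2)|$. Where you differ is the finish. The paper cites \cite{KRU12b} for the equivalence of the Wasserstein metric with the $L_1$ norm of the difference of the $|D|$-CDFs, notes that the two CDFs agree on $[0,\tau)$ and differ by at most $1$ on $[\tau,1]$, and reads off the bound $1-\tau$. You instead work straight from Definition~\ref{def:wasserstein}: realizing $U$ and $\min(U,\tau)$ on the same probability space (the natural monotone coupling) and using the $1$-Lipschitz property gives $|\expectation[f(U)]-\expectation[f(\min(U,\tau))]|\le \expectation[(U-\tau)\mathds{1}_{\{U>\tau\}}]\le (1-\tau)\prob\{U>\tau\}\le 1-\tau$, uniformly in $f$. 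This buys two things. First, self-containment: you need nothing beyond the definition, and your intermediate quantity $\expectation[(U-\tau)\mathds{1}_{\{U>\tau\}}]$ is in fact the same number as the paper's $L_1$ area $\int_\tau^1(1-\absDdist{a}(z))\,\dee z$, so the two routes estimate identical objects. Second, by never leaving the magnitude domain you sidestep the $L$-domain CDF comparison that opens the paper's proof; the equality $\prob_{\Ldens{a}}\{x\le z\}=\prob_{\SatLdens{a}_\sym{K}}\{x\le z\}$ claimed there for $0\le z<K$ is really only exact for the non-symmetric saturation, since symmetrization moves an extra mass $\gamma(p-q)$ onto $-K$ and thus shifts the $L$-CDF on $[-K,K)$ --- a harmless slip, because only the magnitude law enters the metric, but your formulation avoids it entirely.
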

\begin{proof}
For any $0 \le z < K$ we have 
\(
\prob_{\Ldens{a}} \{ x \le z \}
=
\prob_{\SatLdens{a}_K} \{ x \le z \}
{=}
\prob_{\SatLdens{a}_\sym{K}} \{ x \le z \}
\)
and for any $z \ge K$ we have 
\(
1
=
\prob_{\SatLdens{a}_K} \{ x \le z \}
=
\prob_{\SatLdens{a}_\sym{K}} \{ x \le z \}\,.
\)
Since $\tanh(x/2)$ is increasing and $\tanh(-x/2)=-\tanh(x/2)$ we have
$
\SatabsDdist{a}{\sym{K}} (z) =
\indicator{z<\tanh(K/2)} \absDdist{a} (z)
+
\indicator{z\ge\tanh(K/2)}\,.
$
By \cite{KRU12b}, we have that the Wasserstein distance is equivalent to the
$L_1$ norm of the difference between the $|D|$-distribtions.
Clearly, the distance is
bounded by $1 - \tanh (K/2)$.
\end{proof}
\begin{definition}[DE for Sym. and Non-Sym. Saturation]
The DE for non-symmetric saturation decoder is 
$S_\sym{K}(\Ldens{c}, \Ldens{x}) = \left\lfloor T(\Ldens{c}, \Ldens{x})\right\rfloor_\sym{K}$ and 
 for the non-symmetric saturation decoder is 
$S_K(\Ldens{c}, \Ldens{x}) = \left\lfloor T(\Ldens{c}, \Ldens{x})\right\rfloor_{K}$. \qed
\end{definition}

We now estimate the distance between the densities appearing in the DE of usual
BP and the DE of the symmetric-saturation operation.  
\begin{lemma}[Distance Between Symmetric-SatBP and BP]\label{lem:distsymclippedandBP}
Consider $\ell$ iterations of the forward DE for the usual BP and the Symmetric-Saturation operation.
Then
\begin{align*}
\batta(S_\sym{K}^{(\ell)}&(\Ldens{c}, \Delta_0)) \leq  \batta(T^{(\ell)}(\Ldens{c}, \Delta_0)) \!+\! 2\sqrt{2}e^{\frac{\!-\!K \!+\! \ell\cdot\ln (2(\dl\!-\!1)(\dr\!-\!1))}2}.
\end{align*}
\end{lemma}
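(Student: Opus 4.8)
The plan is to compare the two iterates in the Wasserstein metric $d$ of Definition~\ref{def:wasserstein}, where the saturation perturbation is already controlled by Lemma~\ref{lem:DistSymClip}, and only at the very end to convert the accumulated Wasserstein discrepancy into a Bhattacharyya bound. Write $\Ldens{x}_\ell = T^{(\ell)}(\Ldens{c},\Delta_0)$ and $\Ldens{z}_\ell = S_\sym{K}^{(\ell)}(\Ldens{c},\Delta_0)$. The first thing I would record is that both families consist of \emph{symmetric} $L$-densities: symmetric-saturation preserves symmetry by Lemma~\ref{lem:symclipprop}(i), and the DE map $T$, being built from $\vconv$ and $\cconv$ of symmetric densities, preserves it as well. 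This symmetry is essential, since the conversion step below rests on an identity for $\batta$ valid only for symmetric densities.

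Next I would set up a one-step recursion for $d_\ell := d(\Ldens{x}_\ell,\Ldens{z}_\ell)$. Writing $\Ldens{z}_\ell = \lfloor T(\Ldens{c},\Ldens{z}_{\ell-1})\rfloor_\sym{K}$ and inserting $T(\Ldens{c},\Ldens{z}_{\ell-1})$ by the triangle inequality gives
\[
d_\ell \le d\bigl(T(\Ldens{c},\Ldens{x}_{\ell-1}),T(\Ldens{c},\Ldens{z}_{\ell-1})\bigr) + d\bigl(T(\Ldens{c},\Ldens{z}_{\ell-1}),\lfloor T(\Ldens{c},\Ldens{z}_{\ell-1})\rfloor_\sym{K}\bigr).
\]
The second term is the per-iteration saturation error; since $T(\Ldens{c},\Ldens{z}_{\ell-1})$ is symmetric, Lemma~\ref{lem:DistSymClip} bounds it by $1-\tanh(K/2)\le 2e^{-K}$. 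For the first term I would invoke Lipschitz continuity of the DE map in the Wasserstein metric \cite{KRU12b}: each of the $\dr-1$ check-convolutions and $\dl-1$ variable-convolutions acts as a non-expansion on the underlying $|D|$-distributions (check-node multiplication and variable-node $L$-addition do not increase transport cost), so $T(\Ldens{c},\cdot)$ is Lipschitz with constant at most $2(\dl-1)(\dr-1)$. Setting $A=2(\dl-1)(\dr-1)$ this yields $d_\ell \le A\,d_{\ell-1} + (1-\tanh(K/2))$, and solving with $d_0=0$ gives $d_\ell \le (1-\tanh(K/2))\,\tfrac{A^{\ell}-1}{A-1}\le 2(1-\tanh(K/2))\,A^{\ell}$, the factor $2$ coming from $\tfrac{A}{A-1}\le 2$ for $A\ge 2$.

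The decisive step is the conversion from $d_\ell$ to the Bhattacharyya gap. For a symmetric density one has $\batta(\Ldens{a}) = \expectation\bigl[\sqrt{1-|D|^2}\bigr]$ over its $|D|$-distribution, so taking an optimal coupling $(X,Y)$ of the $|D|$-distributions of $\Ldens{x}_\ell$ and $\Ldens{z}_\ell$ realizing $d_\ell = \expectation|X-Y|$ gives $\batta(\Ldens{z}_\ell)-\batta(\Ldens{x}_\ell) = \expectation[\sqrt{1-Y^2}-\sqrt{1-X^2}]$. The test function $\sqrt{1-t^2}$ is \emph{not} Lipschitz near $t=1$, so the Wasserstein (Lipschitz-$1$) duality cannot be applied directly; this is the main obstacle. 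I would resolve it with the H\"older-$\tfrac12$ estimate $|\sqrt{1-Y^2}-\sqrt{1-X^2}|\le\sqrt{|X^2-Y^2|}\le\sqrt{2|X-Y|}$, followed by Jensen's inequality, to obtain $\batta(\Ldens{z}_\ell)-\batta(\Ldens{x}_\ell)\le \sqrt{2}\,\sqrt{d_\ell}$. It is precisely this square root that turns the $e^{-K}$-scale saturation error into the $e^{-K/2}$ scaling and the $A^{\ell}$ amplification into the claimed exponent $\ell/2$.

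Combining the last two displays,
\[
\batta(\Ldens{z}_\ell)-\batta(\Ldens{x}_\ell)\le \sqrt{2}\,\sqrt{2(1-\tanh(K/2))}\,A^{\ell/2}\le 2\sqrt{2}\,e^{-K/2}\bigl(2(\dl-1)(\dr-1)\bigr)^{\ell/2},
\]
using $1-\tanh(K/2)\le 2e^{-K}$; recalling $\Ldens{z}_\ell = S_\sym{K}^{(\ell)}(\Ldens{c},\Delta_0)$ and $\Ldens{x}_\ell=T^{(\ell)}(\Ldens{c},\Delta_0)$ this is the stated inequality. I expect the only delicate points to be (a) quoting the Wasserstein-Lipschitz constant $2(\dl-1)(\dr-1)$ of $T$ from \cite{KRU12b} rather than re-deriving it, and (b) the non-Lipschitz conversion handled above; the remaining manipulations (the geometric series and the elementary inequalities $|\sqrt a-\sqrt b|\le\sqrt{|a-b|}$ and $1-\tanh(K/2)\le 2e^{-K}$) are routine.
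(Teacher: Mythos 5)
Your proposal is correct and follows essentially the same route as the paper's proof: a triangle-inequality recursion for the Wasserstein distance with per-iteration saturation error $1-\tanh(K/2)$ from Lemma~\ref{lem:DistSymClip}, the Lipschitz constant $2(\dl-1)(\dr-1)$ for the DE map (the paper's invocation of item (viii) of Lemma~13 in \cite{KRU12b}), geometric unrolling, and finally a square-root conversion from Wasserstein distance to the Bhattacharyya gap. The only difference is that where the paper simply cites item (ix) of Lemma~13 in \cite{KRU12b} for that conversion, you re-derive it inline (optimal coupling, the identity $\batta(\Ldens{a})=\expectation[\sqrt{1-|D|^2}]$ for symmetric densities, the H\"older-$\tfrac{1}{2}$ bound, and Jensen), which is valid and reproduces the constant $2\sqrt{2}$ exactly.
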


\begin{proof}
Using the triangle inequality, (viii), Lem. 13 in \cite{KRU12b} and  Lemma~\ref{lem:DistSymClip} 
we obtain the upper bound $d(T^{(\ell)}(\Ldens{c}, \Delta_0), S_\sym{K}^{(\ell)}(\Ldens{c}, \Delta_0)) \leq 
\alpha_{\ell} d(T^{(\ell - 1)}(\Ldens{c}, \Delta_0), S_\sym{K}^{(\ell - 1)}(\Ldens{c}, \Delta_0)) \!+\! 1 \!-\! \tanh\Big(\frac{K}{2}\Big)$,
where
$
\alpha_{\ell}  = 2 (\dl-1)  
 \sum_{j\!=\!1}^{\dr\!-\!1} (1\!-\!\batta^2(\Ldens{a}))^{\frac{\dr\!-\!1\!-\!j}2}(1\!-\!\batta^2(\Ldens{b}))^{\frac{j\!-\!1}2}
$, $\Ldens{a} = T^{(\ell - 1)}(\Ldens{c}, \Delta_0)$ and $\Ldens{b} = S_\sym{K}^{(\ell - 1)}(\Ldens{c}, \Delta_0)$.
Continuing with the above inequality we get the upper bound,
 $(1\! - \!\tanh\Big(\frac{K}{2}\Big))(1 \!+\! \alpha_{\ell} \!+\! \alpha_{\ell}\alpha_{\ell-1} \!+\! \dots \!+\! \alpha_{\ell}\alpha_{\ell-1}\cdots \alpha_2).
$
In general, we are transmitting below the BP threshold, so both $\Ldens{a}$ and $\Ldens{b}$ could 
be close to $\Delta_{+\infty}$. 
Thus, $(1 \!+\! \alpha_{\ell} \!+\! \alpha_{\ell}\alpha_{\ell-1} \!+\! \dots \!+\! \alpha_{\ell}\alpha_{\ell-1}\cdots \alpha_2) \leq  (2(\dl-1)(\dr-1))^{\ell}$. Using $1 - \tanh(K/2) \leq 2e^{-K}$ and (ix) Lemma~13 
in \cite{KRU12b} we get the lemma.
\end{proof}

\section{Convergence of Nonsymmetric Saturated DE}
In the previous section we show that, when transmitting below the threshold of the full BP decoder, the Battacharrya parameter of the densities in the symmetric SatBP decoder can be arbitrarily small. In this section we make the connection to the non-symmetric SatBP.

\subsection{Non-symmetrized SatBP Decoder}
We now show that the Battacharrya parameter for the non-symmetric SatBP decoder also can be made as small as desired by choosing $K$ large enough. We first consider a fixed computation tree and then average over the tree ensemble. 

We begin with an operational description of symmetrization.
Consider a fixed tree $\sf{T}$ of depth $\ell.$
Let $Y$ denote the vector of received values associated to the variable nodes
under the all-zero codeword assumption.
In addition, for each variable node we assume an independent random variable uniformly
distributed on $[0,1].$  We denote the vector of these variables by $Z.$
Now, the node operations correspond to BP except that outgoing messages from 
the variable nodes are magnitude saturated at $K.$
Independent random variables are used for the flipping operation at each node, where
the flipping probabilty is determined by density evolution.
If the outgoing message has magnitude $K$ then its sign is flipped if $Z_v < \lambda_v$
where $\lambda_v$ is the appropriate flipping probability.

The distribution of the outgoing message $z$ is $S^{(\ell)}_\sym{K}(\Ldens{c}, \Delta_0)).$
Let us consider the conditional distribution $p(z\mid Y,Z).$
We obtain $S^{(\ell)}_\sym{K}(\Ldens{c}, \Delta_0))$ by averaging over $Y$, $Z$ and the code ensemble.
Let $A_K$ denote the event that no flipping occurs.
$
p(z\mid Y ) = p(z\mid Y,A_K) p(A_K) + p(z\mid Y,\bar{A}_K)(1-p(A_K)).
$
Averaging over $Y$ and re-writing we obtain
$$
p(z\mid A_K) = \big(p(z ) -  p(z\mid \bar{A}_K)(1-p(A_K))\big)/p(A_K).
$$
Now $p(z\mid A_K)$ is the distribution of the non-symmetric SatBP decoder.
Intuitively one expects $p(z\mid \bar{A}_K)$ to be inferior (higher probability of error,
larger Battacharyya parameter) to $p(z\mid {A}_K),$ but this appears difficult to prove.

Let us compute the probability of $A_K$. Let the received LLR magnitude of a variable node $v$ be $z \geq K$. 
The probability with which we flip the bit is such that the final error probability is equal to $\frac{e^{-K}}{1+e^{-K}}$. For received LLR magnitude of $z$, the probability that it is received correctly is $\frac1{1+e^{-z}}$. As a consequence we get,
$
\frac{e^{-K}}{1+e^{-K}} = \lambda_v \frac{1}{1 + e^{-z}} + (1 - \lambda_v) \frac{e^{-z}}{1 + e^{-z}},
$
where $\lambda_v$ is the flipping probability of variable node $v$. Solving we get $\lambda_v = \frac{e^{-K}}{1 + e^{-K}} \frac{1 - e^{-z + K}}{1 - e^{-z}} \leq \frac{e^{-K}}{1 + e^{-K}}.$ Thus the probability that a variable node, with a received LLR magnitude greater than $K$, is not flipped is at least equal to $\frac1{1 + e^{-K}} \geq 1 - e^{-K}$. Hence, we get $p(A_K) \ge (1-e^{-K})^{|V(\sf{T})|} \ge  1-e^{-K}{|V(\sf{T})|}$ where ${|V(\sf{T})|}$ is the number of variable nodes in the tree.
From the above analysis we have the following lemma.
\begin{lemma}[SatBP Decoder versus Symmetrized SatBP]\label{lem:clippeddecodervssymclipping}
For any $\epsilon > 0$ and $\ell \in \naturals$, there exists a $K$ large enough
such that
$
\batta(S^{(\ell)}_K(\Ldens{c}, \Delta_0)) \leq \frac1{1-\epsilon}\batta(S^{(\ell)}_\sym{K}(\Ldens{c}, \Delta_0)).
$
\end{lemma}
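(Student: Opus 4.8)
The plan is to exploit the mixture decomposition of the symmetrized output density that was just set up, together with the fact that the Battacharyya functional $\batta$ is \emph{linear} in the density---being the integral of the density against $e^{-x/2}$---and takes only non-negative values. The key observation is that we do \emph{not} need to establish that $p(\cdot\mid\bar A_K)$ is inferior to $p(\cdot\mid A_K)$, which the discussion above flags as difficult; a one-sided bound suffices, and linearity of $\batta$ delivers it for free.

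Concretely, I would fix the depth-$\ell$ computation tree $\sf{T}$ and apply $\batta$ to the identity $S^{(\ell)}_\sym{K}(\Ldens{c},\Delta_0) = p(A_K)\,p(\cdot\mid A_K) + (1-p(A_K))\,p(\cdot\mid\bar A_K)$, recalling from the discussion that $p(\cdot\mid A_K)=S^{(\ell)}_K(\Ldens{c},\Delta_0)$. By linearity,
\[
\batta(S^{(\ell)}_\sym{K}(\Ldens{c},\Delta_0)) = p(A_K)\,\batta(S^{(\ell)}_K(\Ldens{c},\Delta_0)) + (1-p(A_K))\,\batta(p(\cdot\mid\bar A_K)).
\]
Since $e^{-z/2}>0$ for every saturated value $z\in[-K,K]$, the functional $\batta$ is non-negative on any density, so the last term is $\ge 0$. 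Dropping it and rearranging yields
\[
\batta(S^{(\ell)}_K(\Ldens{c},\Delta_0)) \le \tfrac{1}{p(A_K)}\,\batta(S^{(\ell)}_\sym{K}(\Ldens{c},\Delta_0)).
\]

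It then remains to push $p(A_K)$ above $1-\epsilon$. Here I would invoke the bound $p(A_K)\ge (1-e^{-K})^{|V(\sf{T})|}$ derived just before the statement. For a fixed number of iterations $\ell$ the number of variable nodes $|V(\sf{T})|$ in the depth-$\ell$ regular computation tree is a fixed finite quantity independent of $K$, so $(1-e^{-K})^{|V(\sf{T})|}\to 1$ as $K\to\infty$; hence there is a $K$ large enough that $p(A_K)\ge 1-\epsilon$, i.e. $1/p(A_K)\le 1/(1-\epsilon)$, which gives the claim. Because $|V(\sf{T})|$ and therefore the lower bound on $p(A_K)$ is the same for every tree of the regular ensemble, the per-tree inequality is uniform, and averaging over $Y$, the flip variables, and the ensemble---using once more the linearity of $\batta$ in the density---preserves both the decomposition and the final inequality.

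The only genuinely delicate point is conceptual rather than computational: the natural monotonicity argument, that conditioning on the ``bad'' event $\bar A_K$ can only worsen the density, seems hard to justify, as the discussion itself notes. The proof circumvents this entirely by using only non-negativity of $\batta(p(\cdot\mid\bar A_K))$; the price is the factor $1/p(A_K)$, which is harmless precisely because $p(A_K)$ can be made arbitrarily close to $1$ for fixed $\ell$.
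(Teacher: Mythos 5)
Your proposal is correct and follows essentially the same route as the paper: the paper's own (very terse) argument is exactly this mixture decomposition over the no-flip event $A_K$, the identification of $p(\cdot\mid A_K)$ with the non-symmetric SatBP density, and the bound $p(A_K)\ge(1-e^{-K})^{|V({\sf T})|}$, with the final step---applying $\batta$, using its linearity and non-negativity to drop the $\bar A_K$ term, and picking up the factor $1/p(A_K)$---left implicit where you make it explicit. Your observation that only non-negativity of $\batta(p(\cdot\mid\bar A_K))$ is needed, rather than the hard-to-prove monotonicity, is precisely the point of the paper's construction.
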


\section{Stability Analysis}
An important part of the asymptotic analysis of LDPC codes involves the analysis of the convergence of DE to a zero error state.  In this section we analyze the stability of the SatBP.
We begin with some necessary conditions.

For stability of the zero error condition there must exist a positive invariant set
of zero error distributions, i.e,  a subset $\mathcal{S}$ of distributions
so that $\perr{(\Ldens{s})} =0$ for all $\Ldens{s} \in \mathcal{S}$ and
$S_{K}(\Ldens{c}, \Ldens{s}) \in \mathcal{S}.$
Existence of $\mathcal{S}$ follows easily from the compactness of the 
space of densities and continuity of DE.

\begin{lemma}\label{lem:necessarysupport}
Assume the channel $\Ldens{c}$ has support at $-L,$ $L>0.$
In an irregular ensemble with minimum variable degree $d_l$
the support of all densities in $\mathcal{S}$ must lie in 
$[L/(d_l-2),\infty).$
\end{lemma}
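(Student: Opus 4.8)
The plan is to track the infimum of the support through a single round of density evolution and then exploit the invariance $S_K(\Ldens{c},\cdot):\mathcal{S}\to\mathcal{S}$ to close a self-consistent inequality. Write $m_{\Ldens{s}} = \inf\operatorname{supp}(\Ldens{s})$ for $\Ldens{s}\in\mathcal{S}$ and set $m = \inf_{\Ldens{s}\in\mathcal{S}} m_{\Ldens{s}}$. Since every element of $\mathcal{S}$ is a zero-error density ($\perr=0$), its support lies in $(0,\infty)$, so $m\ge 0$; the goal is to upgrade this to $m \ge L/(\dl-2)$. The hypothesis ``$\Ldens{c}$ has support at $-L$'' I would read as $\inf\operatorname{supp}(\Ldens{c}) = -L$, i.e. the smallest channel LLR equals $-L$.

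First I would establish the check-node contraction. For strictly positive incoming messages $\mu_1,\dots,\mu_{\dr-1}$, the outgoing message $2\tanh^{-1}\!\big(\prod_i \tanh(\mu_i/2)\big)$ is positive and bounded above by $\min_i \mu_i$, since each factor $\tanh(\mu_i/2)$ lies in $(0,1)$, so the product is at most $\tanh(\mu_{\min}/2)$ and $\tanh^{-1}$ is increasing. Because the check rule is increasing in each argument, the infimum of $\operatorname{supp}(\Ldens{s}^{\cconv(\dr-1)})$ is attained as all inputs tend to $m_{\Ldens{s}}$, and the contraction then gives $c_{\Ldens{s}} \defas \inf\operatorname{supp}(\Ldens{s}^{\cconv(\dr-1)}) \le m_{\Ldens{s}}$. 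The bound $c\le \mu_{\min}$ is independent of the number of factors, so this holds uniformly over check degrees in the irregular ensemble.

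Next I would compute the infimum after the variable-node step, recalling $S_K(\Ldens{c},\Ldens{s}) = \lfloor T(\Ldens{c},\Ldens{s})\rfloor_K$. A variable node of the minimum degree $\dl$ emits $\lfloor \mu_0 + \sum_{i=1}^{\dl-1}\nu_i\rfloor_K$, where $\mu_0 \ge -L$ and each check message $\nu_i \ge c_{\Ldens{s}}$. Since the infimum of support is additive under the variable convolution $\vconv$, and $\lfloor\cdot\rfloor_K$ fixes values of magnitude below $K$, the smallest emitted message is $-L + (\dl-1)c_{\Ldens{s}}$; this is exactly $\inf\operatorname{supp}(S_K(\Ldens{c},\Ldens{s}))$, and the minimum-degree node is the worst case because any extra incoming term is positive. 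Invariance gives $S_K(\Ldens{c},\Ldens{s})\in\mathcal{S}$, hence $-L + (\dl-1)c_{\Ldens{s}} \ge m$. Combining with $c_{\Ldens{s}}\le m_{\Ldens{s}}$ yields $-L + (\dl-1)m_{\Ldens{s}} \ge m$ for every $\Ldens{s}\in\mathcal{S}$; taking the infimum over $\mathcal{S}$, i.e. letting $m_{\Ldens{s}}\downarrow m$, gives $-L + (\dl-1)m \ge m$, that is $(\dl-2)m \ge L$, which is the claim.

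The main obstacle is the bookkeeping around the infimum of the support rather than any single delicate estimate: I must argue that the extremal configuration (channel value $-L$ together with all incoming check messages at the internal check-density infimum $c_{\Ldens{s}}$) is genuinely approached within the support, so that $-L+(\dl-1)c_{\Ldens{s}}$ is the true infimum of the output support and not merely a lower bound, and I must verify that saturation never interferes, since it acts only at magnitude $K$, far from these small values. A secondary point is to justify interpreting the channel hypothesis as $\inf\operatorname{supp}(\Ldens{c})=-L$: if $\Ldens{c}$ placed mass strictly below $-L$, the constant in the bound would shift accordingly.
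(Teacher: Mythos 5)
Your proof is correct, and it rests on the same two structural facts as the paper's own argument: a check node cannot raise the lower end of the support (the output is at most $\min_i \mu_i$ for positive inputs, uniformly in the check degree), and a minimum-degree variable node shifts that lower end to $-L+(\dl-1)c_{\Ldens{s}}$, where $c_{\Ldens{s}}$ is the infimum of the check-output support. Where you differ is in how the argument is closed. The paper argues by contradiction and iteration: starting from a single $\Ldens{a}^{(0)}\in\mathcal{S}$ with support at some $z<L/(\dl-2)$, it tracks the iterates and shows their support reaches down to $L-2^{k}\delta$ with $\delta=L-(\dl-2)z>0$, so after finitely many rounds of DE the support hits negative values, contradicting $\perr=0$ on the invariant set. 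You instead take the set-wide infimum $m=\inf_{\Ldens{s}\in\mathcal{S}}\inf\operatorname{supp}(\Ldens{s})$ and close a one-step self-consistent inequality, $m\le -L+(\dl-1)m$, giving $(\dl-2)m\ge L$ at once; this avoids the induction entirely, at the cost of needing $m>-\infty$, which you correctly secure from $\perr=0$ (saturation would also do). Two small refinements to your write-up: first, you only need the one-sided bound $\inf\operatorname{supp}(S_K(\Ldens{c},\Ldens{s}))\le -L+(\dl-1)c_{\Ldens{s}}$, not the exact equality you flag as the main obstacle, and that one-sided bound follows from the standard fact that the support of a convolution contains sums of support points; second, the correct way to dismiss interference from saturation at the lower end is not that $-K$ is ``far from these small values'' (a priori the lemma places no relation between $L$ and $K$), but that if the unsaturated value fell below $-K$ then $S_K(\Ldens{c},\Ldens{s})$ would carry mass at $-K<0$, contradicting the zero-error property of $\mathcal{S}$, so that case simply cannot occur. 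With those touches both proofs are sound; the paper's iteration exhibits the instability mechanism more vividly (geometric downward drift of the support once it dips below $L/(\dl-2)$), while your fixed-point formulation delivers the threshold in a single step.
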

\begin{IEEEproof}
Assume  $\Ldens{a}^{(0)}\in \mathcal{S}$ has support at $z<L/(d_l-2).$
Then $\Ldens{b}^{(0)}$ has support on $[-\infty,z].$ 
Hence $\Ldens{a}^{(1)}$ has support on $[-\infty,z_1]$ 
where $z_1 = (d_l-1) z - L< L - 2\delta$
where $\delta = L-(d_l-2) z.$
By induction we have $\Ldens{a}^{(k)}$ has support on $[-\infty,L-2^k\delta]$ 
and for $k$ large enough the probability of error is positive.
\end{IEEEproof}

\subsection{Failure of Stability with Degree Two}

From Lemma \ref{lem:necessarysupport} we have immediately
\begin{lemma}
In an irregular ensemble with $\lambda_2>0$ no invariant $\mathcal{S}$
exists for any value of $K<\infty$ unless the channel is the BEC.
\end{lemma}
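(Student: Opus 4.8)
The plan is to read the statement off as the boundary case $\dl = 2$ of Lemma~\ref{lem:necessarysupport}. Since $\lambda_2 > 0$ the minimum variable degree is $2$, so $\dl = 2$ and the lower support bound $L/(\dl-2)$ produced by Lemma~\ref{lem:necessarysupport} degenerates to $+\infty$. Thus the only way an invariant zero-error set $\mathcal{S}$ could exist is if every density in it were supported at $+\infty$. But saturation at level $K$ confines the support of every output density to $[-K,K]$, and $+\infty \notin [-K,K]$ for $K < \infty$; hence no such $\mathcal{S}$ can exist. The exception is exactly the channel with no negative support, i.e.\ the $\BEC$, for which the quantity $L$ does not exist and the argument of Lemma~\ref{lem:necessarysupport} never gets started.

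Because the bound $L/(\dl-2)$ formally divides by zero, I would make the specialization rigorous by re-running the support-tracking argument directly at $\dl = 2$. Fix any $\Ldens{s}_0 \in \mathcal{S}$ and set $\Ldens{s}_{k} = S_K(\Ldens{c}, \Ldens{s}_{k-1})$; by invariance all $\Ldens{s}_k$ lie in $\mathcal{S}$, so $\perr(\Ldens{s}_k) = 0$ and the infimum of the support of $\Ldens{s}_k$ is nonnegative. I then track this infimum through one density-evolution step. The check-node convolution $\Ldens{s}^{\cconv \dr - 1}$ can only contract magnitudes, so its support infimum is $2\tanh^{-1}(\tanh(\cdot/2)^{\dr-1})$ evaluated at the infimum of $\Ldens{s}$, which is at most that infimum. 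For $\dl = 2$ the variable-node step is a single convolution with the channel $\Ldens{c}$; since infima of independent convolved densities add and, by non-$\BEC$-ness, $\Ldens{c}$ has support reaching some $-L < 0$, the infimum drops by at least $L$. Saturation clips at $-K$ but cannot raise a negative infimum back to $0$. Hence the support infimum of $\Ldens{s}_k$ decreases by at least $L$ per iteration, and after finitely many steps it becomes strictly negative, forcing $\perr(\Ldens{s}_k) > 0$ and contradicting $\Ldens{s}_k \in \mathcal{S}$.

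The main obstacle is the bookkeeping for the support infimum rather than any deep idea: I must verify that (a) the check-node map $2\tanh^{-1}(\prod \tanh(\mu_i/2))$ is monotone in each argument, so the worst case is all inputs at the infimum, and that it never exceeds the minimum input; (b) the infimum of a variable (additive) convolution is the sum of infima, so the channel's negative support genuinely shifts the infimum down; and (c) the saturation operation $\lfloor\cdot\rfloor_K$ only ever clamps at $\pm K$ and so cannot undo the downward drift before it crosses zero. The remaining point to nail down is the channel characterization used to invoke $L > 0$: by symmetry a $\BMS$ $L$-density with no support on $(-\infty,0)$ can place mass only at $\{0,+\infty\}$, i.e.\ it is the $\BEC$; so every non-$\BEC$ channel supplies the required $-L < 0$, which closes the argument.
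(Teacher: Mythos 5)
Your proposal is correct and follows essentially the same route as the paper: the paper's proof is precisely the one-line invocation of Lemma~\ref{lem:necessarysupport} at $\dl=2$ (where the bound $L/(\dl-2)$ degenerates, ruling out any zero-error invariant set supported on $[-K,K]$ unless the channel has no negative support, i.e.\ is the $\BEC$). Your re-running of the support-drift induction at $\dl=2$ and your symmetry argument that ``no negative support $\Rightarrow$ BEC'' simply make explicit details the paper leaves implicit, so there is no substantive difference in approach.
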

\begin{IEEEproof}
If $d_l = 2$ and the channel is not the BEC and hence has support on $(-\infty,0),$
then Lemma \ref{lem:necessarysupport}
shows that there can be no positive invariant zero-error set of distributions with support
on $[-K,K].$
\end{IEEEproof}

In the case of the BEC it can be seen that saturated DE matches unsaturated DE except that the mass
at $+\infty$ in unsaturated DE is not placed at $+K.$  Hence, stability is unaffected by saturation.

If the channel has infinite support, then there is no possibility of stability under saturation.
The condition on the  finite channel support is given later in the section on stability with degree $\geq 3$.

\subsection{Near Stability}\label{sec:nearstab}
The stability analysis of standard irregular ensembles under BP decoding
rests on on the results 
$\batta(\Ldens{c}\vconv \lambda{(\Ldens{a})})  = \batta{(\Ldens{c})} \lambda{(\batta{(\Ldens{a})})}$ and
$\batta{(\rho{(\Ldens{a})})} \le 1 - \rho(1-\batta{(\Ldens{a})})\label{eqn:DEcnodebatta}$.
The first equation continues to hold without symmetry of 
$\Ldens{a}$ or $\Ldens{c}.$
The check node inequality, however, does not necessarily hold without symmetry.
We have, however, a substitute with a slight variation on a result from \cite{BRU07}:
for any L-densities $\Ldens{a}$ and $\Ldens{b}$ we have
$
\batta{(\Ldens{a} \cconv \Ldens{b})} \le
\batta{(\Ldens{a})} + \batta{( \Ldens{b})}\,.
$
This result holds for a wide range of check node update operations including
BP and the min-sum decoder.  To incorporate saturation into the analysis based on
the Bhattacharyya parameter we have the inequality
$
\batta{(\SatLdens{a}_K)}\le \batta{(\Ldens{a})} + e^{-K/2}.
$
Note that because of saturation, the minimum value of the Battacharyya parameter is equal to $e^{-K/2}$.

{\em Minimum variable node degree equal to 2:}
Let $\Ldens{a}^{(k)}$ be any $L$-density which need not be symmetric.
Consider an irregular ensemble and assume $\rho'(1) \batta{(\Ldens{a}^{(k)}})<1.$
Using the notation $\bar{\lambda}_2 = 1 - \lambda_2$ we upper bound $
\batta(\Ldens{a}^{(k\!+\!1)})$ for $\Ldens{a}^{(k\!+\!1)} = S_K(\Ldens{c}, \Ldens{a}^{(k)})$ by
\begin{align*}
\lambda_2  \batta(\Ldens{c}) \rho'(1) \batta{(\Ldens{a}^{(k)})} 
\!+\!\bar{\lambda}_2 \batta(\Ldens{c}) (\rho'(1) \batta{(\Ldens{a}^{(k)}}))^2
\!+\! e^{\frac{-K}2}
\end{align*}
Let $\xi>0$ satisfy 
\[
\eta := \lambda_2  \batta(\Ldens{c}) \rho'(1) + \bar{\lambda}_2\batta(\Ldens{c}) (\rho'(1))^{{2}} \xi < 1
\]
and assume $\xi \ge \batta{(\Ldens{a}^{(k)}})$ and $K$ is large enough so that $\xi \ge \frac{2-\eta}{1-\eta} e^{-K/2}.$
Then there exists $N$ so that for $n\ge N$ we have
$
\batta{(\Ldens{a}^{(n)})}
\le
\frac{1}{1-\eta} e^{-K/2}.
$

{\em Minimum variable node degree equal to 3:}
Let us now assume that the minimum variable node degree is 3. Following the previous notation, we upper bound $\batta{(\Ldens{a}^{(k+1)}})$ by
\begin{align*}
\lambda_3  \batta(\Ldens{c}) (\rho'(1) \batta{(\Ldens{a}^{(k)})})^2 + \bar{\lambda}_3\batta(\Ldens{c}) (\rho'(1) \batta{(\Ldens{a}^{(k)}}))^3
+ e^{\frac{-K}2}
\end{align*}
Let $\xi$ be the positive solution to
$
 \lambda_3  \batta(\Ldens{c}) \rho'(1)^2 \xi
+\bar{\lambda}_3\batta(\Ldens{c}) \rho'(1)^3 \xi^2
= 1/2\,.
$
Assume $K$ large enough so that $2e^{-K/2} < \xi.$ 
A little algebra shows that if $\batta{(\Ldens{a}^{(0)})} \le  \xi,$ then there exists $N$ so that for all $n\ge N$  we have
\begin{align}
\batta{(\Ldens{a}^{(n)})} & \le 2e^{-K/2}, \quad \quad \batta{(\Ldens{b}^{(n)})}  \le  2\rho'(1) e^{-{K/2}}\label{vnodecheckallbnd}
\end{align}
where the second inequality follows from the first and the additive bound on Bhattacharrya at check node.

This analysis can not show convergence to zero error although it can be used
to show convergence to relatively small error rate.  This is true even in the presence
of degree two variable nodes, where zero error stability is not possible.
For degree three and higher stability can be shown, but a refined analysis is needed.

\subsection{Stability with Minimum Variable Degree Equal to Three}
In this section we consider irregular ensembles where the minimum variable node degree
is at least three.  We generalize the standard stability analysis by separating out the
saturated probability mass and tracking it through the variable node and check node updates.
For simplicity we shall restrict to right regular ensembles.
If the check node degree is $d_r$ then $K'$ is the magnitude of an outgoing message
all of whose incoming messages have magnitude $K.$  Although we focus on BP-like decoding
our analysis applies to other algorithms such as min-sum, in which case we have $K'=K.$
In general, if $K_1,...,K_d$ are the magnitudes of incoming messages then we assume that
the outgoing magnitude satisfies
$
K' \le \min_i \{ K_i \}
$
and
$
e^{-K'} \le \sum_i e^{-K_i}.
$
Both conditions are satisfied by BP and min-sum.
Messages entering a check node update $\Ldens{a}$ have the form 
$
\Ldens{a} = \gamma D(p,K) +\bar{\gamma}\Ldens{m},
$
where $\Ldens{m}$ is supported on $(-K,K)$ and has total mass $1.$
Messages entering a variable node update $\Ldens{b}$ have the form 
$
\Ldens{b} = \gamma D(p,K') +\bar{\gamma}\Ldens{m},
$
where $K'<K$ is the outgoing magnitude at a check when all incoming magnitudes equal $K$
and $\Ldens{m}$ is supported on $(-K',K').$ 
From above, we have $e^{-K'} \le (d_r-1) e^{-K}$ so that
$K' \ge K -\ln(d_r-1).$ Furthermore, we choose $K$ large enough so that $2K' > K$.

In the sequel we assume that the support of the channel $\Ldens{c}$ is $(-K'', K'')$. The condition on $K''$ is that $K'' \leq 2K' \!-\! K $. 
The proof of the two statements below can be found in the upcoming paper \cite{KRI14}.
\subsubsection{Variable Node Analysis}
Let us assume a variable node of degree $d+1$ and incoming density
$
\Ldens{b} = \gamma D(p,K') +\bar{\gamma}\Ldens{m}.
$
The outgoing density from the variable node has the form 
$
\Ldens{a} =\gamma' D(p',K) +\bar{\gamma}'\Ldens{m'}.
$
Then we have
\begin{align*}
\bar{\gamma}' \batta(\Ldens{m'})
 \le&
 \batta{(\Ldens{c})} d e^{-(d-1)(K/2-\ln c)} (\bar{\gamma}\batta(\Ldens{m})) \,
\\& + (\gamma p)   e^{-(d/2-1)(K-\ln c)} e^{d/2\ln (3e)}(1 + 2\dr), \\
{\gamma}' p'
\le &
\, e^{-K/2} \batta{(\Ldens{c})}  d e^{-(d-1)(K/2-\ln c)} (\bar{\gamma}\batta(\Ldens{m}))
\\ & 
 +2^{d} (\gamma p)^{\lfloor (d+2)/2 \rfloor}\,.
\end{align*}
\subsubsection{Check Node Analysis}
Let us assume a right regular ensemble with check degree $d+1.$
Let us represent the density entering the check node as
$\gamma D(p,K) + \bar{\gamma}\Ldens{m}$
where $\Ldens{m}$ is a density supported on $(-K,K).$
Then the density emerging out of the check node is given by 
$ \gamma' D(p',K_d) + \bar{\gamma'}\Ldens{m}' \triangleq (\gamma D(p,K) + \bar{\gamma}\Ldens{m})^{\cconv d}
$, where $K_d$ is the magnitude,
which satisfies $K-\ln d \le K_d \le K$ and support of $\Ldens{m}$ is also $(-K_d, K_d)$. Then we have,
$\bar{\gamma'}\batta(\Ldens{m}')  \leq \bar{\gamma}\batta{(\Ldens{m})}( \gamma C d^3 + d)$. Also, we note that $p' = p_d \le d p.$ 
\subsubsection{Proof of Stability for Degree Three+}\label{sec:stabdegthree+}
Let us assume that the minimum variable node degree is at least three
and a right regular degree $d_r.$ 
In view of \eqref{vnodecheckallbnd} we may assume
$\batta{(\Ldens{a}^{(n)})}  \le 2e^{-K/2}$  which implies
$\batta{(\Ldens{b}^{(n)})}  \le  2(d_r-1) e^{-{K/2}}$
for all $n \ge 0.$
We write $\Ldens{a}^{(n)} = \gamma D(p,K) + \bar{\gamma}\Ldens{m}$
and now have the bound (for all $n$)  $\gamma pe^K \le 2.$
Set  $c=2(d_r-1)$ and from the check node analysis above and the bound  $\gamma pe^K \le 2$ we get $\gamma C \le 2 (d_r-1) + 1 + \sqrt{d_r-1}.$ Abusing the notation, set $C = 2 (d_r-1) + 1 + \sqrt{d_r-1}.$ 
Then $\Ldens{b}^{(n)} = \gamma' D(p',K_{d_r-1}) +  \bar{\gamma}'\Ldens{m}'$ denote the density emerging from the check node update where the support of $\Ldens{m}'$ is $(-K_{\dr-1}, K_{\dr-1})$.  Then from the check node analysis above we get
\begin{align*}
\gamma' p' &\le  (d_r-1) \gamma p \\
 \bar{\gamma} \batta{(\Ldens{m}')} & \le  \bar{\gamma}\batta{(\Ldens{m})}( \gamma C (\dr-1)^3 + \dr-1). 
\end{align*}
We choose $a,b,c,d>0$ arbitratily small.  Then for all $K$ large enough and for all $d \ge 2$ 
and $d$ less then the maximum variable node degree, we have
\begin{align*}
(d_r-1)  2^{d} (2e^{-K/2})^{\lfloor d/2 \rfloor} & \le a \\
\batta{(\Ldens{c})}  d e^{-(d-1)(K/2-\ln c)} (d_r-1)C & \le b \\
 e^{-(\frac{d}2-1)(K-\ln c)} e^{\frac{d}2\ln (3e)}(1 + 2\dr)  2(d_r-1)e^{-K/2}& \le c \\
  \batta{(\Ldens{c})} d e^{-(d-1)(K/2-\ln c)} (d_r-1) C  & \le d 
\end{align*}
We now obtain for $n \ge 0,$
\begin{align*}
\begin{bmatrix} e^{K/2}  \gamma    p  \\  \bar{\gamma} \batta{(\Ldens{m})} \end{bmatrix}^{(n+1)}
\le
\begin{bmatrix}
 a        & b \\
 c & d
\end{bmatrix}
\begin{bmatrix} e^{K/2}  \gamma    p  \\  \bar{\gamma} \batta{(\Ldens{m})} \end{bmatrix}^{(n)}
\end{align*}
which  is a stable system for suitable $a,b,c,d.$ It is now easy to see that the bit error rate goes to zero since
the probability mass on $[-K,K)$ goes to zero as $n$ becomes large.

\section{Discussion}
Consider a channel family, BMS($\ent$), ordered by entropy and let $\ent^{\BPsmall}$ denote the BP threshold when transmitting over this channel family using a $(\lambda, \rho)$ LDPC ensemble such that the variable degree is at least three. 
 Note that when we consider a channel from BMS($\ent$), we first (symmetric) saturate the channel to an appropriate finite support. One can show that as the support goes to $(-\infty, \infty)$, the saturated channel BP threshold goes to $\ent^{\BPsmall}$ (cf. \cite{KRI14}).  The above analysis shows that, for $K$ large enough, the SatBP is successful for any channel with entropy arbitrarily close to channel corresponding to $\ent^{\BPsmall}$.

An interesting open question is to understand the scaling of $K$ with the gap to capacity $\delta$. From Lemma~\ref{lem:distsymclippedandBP} we see that one would require the scaling of number of iterations wrt $\delta$ for unsaturated BP which is conjectured to scale like $O(1/\delta)$.

\section{Acknowledgements}
SK would like to thank R. Urbanke for useful discussions.
\bibliographystyle{IEEEtran}
\bibliography{lth,lthpub,extras}

\end{document}